\definecolor{darkblue}{rgb}{0,0,0.6}
\newtheorem{thm}{Theorem}%[section]
\newtheorem{definition}[thm]{Definition}
\numberwithin{equation}{section}
\begin{document}

\title{\bf The rate of convergence to early asymptotic behaviour in age-structured
epidemic models}
\author[1]{Christopher A. Rhodes}
\author[1,*]{Thomas House}
\affil[1]{Warwick Mathematics Institute, University of Warwick, Gibbet Hill
Road, Coventry, CV4 7AL, UK.}
\affil[*]{Corresponding Author: T.A.House@warwick.ac.uk}
\date{}

\maketitle

\begin{abstract}
Age structure is incorporated in many types of epidemic model. Often it is
convenient to assume that such models converge to early asymptotic behaviour
quickly, before the susceptible population has been appreciably depleted.  We
make use of dynamical systems theory to show that for some reasonable parameter
values, this convergence can be slow. Such a possibility should therefore be
considered when parameterising age-structured epidemic models.
\end{abstract}

\section{Introduction}

Age is one of the key variables to consider in any epidemiological analysis,
particularly related to infectious disease \citep{AndersonMay}.  In the context
of human diseases, age is an extrinsic risk factor, strongly influencing an
individual's mixing patterns within the population \citep{Keeling08}; but age is
also an intrinsic risk factor for serious outcomes and this can lead to
important and often counter-intuitive public-health
conclusions \citep{Anderson:1983}.

While classical models of childhood infections used \textit{a priori} matrices
to parameterise the epidemiological mixing between individuals of different
ages \citep{AndersonMay,Keeling08}, more recent studies have attempted to
measure this quantity empirically either using surveys
\citep{Edmunds06,Mossong08} or through the use of synthetic populations derived
from large datasets \citep{DelValle:2007}. The insights gained by doing this
were particularly useful during the 2009 influenza pandemic \citep{Baguelin10}.
A mathematical overview of the impact of recent empirical results for modelling
is given by \citet{Glasser2012}. 

In theoretical analysis of age-structured models, it is often convenient to
make the assumption that early behaviour of the system has converged to an
appropriately defined dominant eigenvector, representing the relative
prevalence of disease in different age groups.  Often this is done through the
consideration of discrete generations of infectious individuals, and the
relevant eigenvector is associated with a `next-generation matrix' whose
dominant eigenvalue is the basic reoproductive ratio $R_0$, which is the
expected number of secondary cases per primary case in a na\"{i}ve population
\citep{Diekmann:1990}.  While discrete generations are not always
straightforwardly related to the system's real-time behaviour, this framework
opens up the possibility of highly general analysis of a large number of
epidemiological scenarios \citep{Diekmann00}.  

Here, we analyse the rate of convergence of an age-structured epidemic to its
dominant eigenvalue in a real-time framework, using both analytic and numerical
methods.  This is done using a dynamical system of ordinary differential
equations (ODEs), together with results from linear algebra. We find that for
some plausible parameter values, it is possible for the timescale of
convergence to be comparable to the timescale over which susceptible depletion
becomes appreciable, and so the assumption of fast convergence to the dominant
eigenvector cannot be made.

\section{Age-structured model}

\subsection{Model definition}

Our modelling approach is based on the SIR compartmental structure in which
individuals are all either Susceptible, Infectious or Recovered, without
demographic processes like birth and death.  Individuals are also placed into
discrete age categories indexed by $a,b \in \mathbb{N}$, rather than using
continuous ages that would require a less tractable partial differential
equation or integro-differential equation model. Our ODE system is then
\begin{equation}
\begin{aligned} 
\frac{{\rm d}S_{a}}{{\rm d}t}\ & = -S_{a} \sum_{b} \beta_{a,b} I_{b} \text{ ,}\\
\frac{{\rm d}I_{a}}{{\rm d}t}\ & = S_{a} \sum_{b} \beta_{a,b} I_{b} - \gamma I_{a}
\text{ .} \label{SIR}
\end{aligned}
\end{equation}
Here the transmission matrix is $\mathbf{M} = (\beta_{a,b})$, where $
\beta_{a,b} $ is the rate of transmission from individuals in age class $b$
to individuals in age class $a$.

We will assume that there are $n$ age classes and adopt, for convenience, a
normalisation
\begin{equation}
S_a(t) + I_a(t) + R_a(t) = 1 \text{ ,} \quad \forall a,t \text{ .} \label{norm}
\end{equation}
This simplifies the algebra involved in analytic work (i.e.\ a more general
case can be considered in our framework at the cost of more complex and less
enlightening expressions) and can be justified for particular choices of age
classes.

Primarily our approach will look at analytically approximating the solutions of
our system near fixed points to determine the behaviour of infecteds and
susceptibles in the population. We will adopt a vector notation to represent
the dynamical state of the system.
\begin{equation}
\mathbf{S} = \left( \begin{array}{c} S_1 \\ \vdots \\ S_n \end{array} \right)
\text{ ,} \qquad
\mathbf{I} = \left( \begin{array}{c} I_1 \\ \vdots \\ I_n \end{array} \right)
\text{ ,} \qquad
\mathbf{x} = \left( \begin{array}{c} \mathbf{S} \\ \mathbf{I} \end{array} \right)
\text{ .}
\end{equation}
In this notation, we consider dynamical systems of the general form
\begin{equation}
\frac{{\rm d}\mathbf{x}}{{\rm d}t} = \mathbf{F}(\mathbf{x}) \text{ .}
\end{equation}
If $\mathbf{x}_*$ is a vector such that $\mathbf{F}(\mathbf{x}_*) =
\mathbf{0}$, the system dynamics can be linearised around this vector to give
\begin{equation}
\frac{{\rm d} \mathbf{y}}{{\rm d} t} = \mathbf{J} \mathbf{y} + O(\varepsilon)
\text{ ,} \qquad \text{where } \mathbf{x} = \mathbf{x}_* + \varepsilon \mathbf{y} \text{ .}
\end{equation}
In this regime, the dynamics are dominated by the eigenvalue of the Jacobian
matrix $\mathbf{J}$ with the largest real component. We call this the
\textit{dominant eigenvalue}, written $\lambda_1$, which has an associated
\textit{dominant eigenvector}, $\mathbf{v}_1$. Typically, there will be a rate
$r$ such that the ratio of the magnitude of $\mathbf{y}$ in the direction of
$\mathbf{v}_1$ to the magnitude of $\mathbf{y}$ in any orthogonal direction
will asymptotically grow like ${\rm e}^{rt}$.  We will have a quantity like $r$
in mind when discussing the rate of convergence of the system to its dominant
eigenvalue. If this rate is low compared to the rate at which $O(\varepsilon)$
effects become important, then we will say that the system fails to converge to
its dominant eigenvector. We note that this is different from the rate of
convergence of a stochastic model to its deterministic limit
\citep{Tuljapurkar:1982}, and is fundamentally a real-time concept, meaning
that direct analogues need not exist in a discrete generation-based framework
as analysed by \citet{Diekmann00}.

\subsection{Early dynamical behaviour}

In order to consider behaviour early in the epidemic, the Jacobian of
\eqref{SIR} will have to be considered, and is given in terms of quantities
defined in the previous section by
\begin{equation*}
\mathbf{J} = 
\left(
\begin{array}{ c c c c c c c c c }
- \sum_{b} \beta_{1,b} I_{b} & 0 & \cdots & 0 & -S_{1} \beta_{1,1} & \cdots & \cdots & -S_{1} \beta_{1,n} \\
0 & \ddots & \ddots & \vdots & \vdots & \ddots & & \vdots \\
\vdots & \ddots & \ddots & 0 & \vdots & & \ddots & \vdots \\
0 & \cdots & 0 & - \sum_{b} \beta_{n,b} I_{b} & -S_{n} \beta_{n,1} & \cdots & \cdots & -S_{n} \beta_{n,n} \\
\sum_{b} \beta_{1,b} I_{b} & 0 & \cdots & 0 & S_{1} \beta_{1,1} - \gamma & S_{1} \beta_{1,2} & \cdots & S_{1} \beta_{1,n} \\
0 & \ddots & \ddots & \vdots & S_{2} \beta_{2,1} & \ddots & \ddots & \vdots \\
\vdots & \ddots & \ddots & 0 & \vdots & \ddots & \ddots & S_{n-1} \beta_{n-1,n} \\
0 & \cdots & 0 & \sum_{b} \beta_{n,b} I_{b} & S_{n} \beta_{n,1} & \cdots & S_{n} \beta_{n,n-1} & S_{n} \beta_{n,n} - \gamma
\end{array}
\right).
\end{equation*}
We are interested in the behaviour around the disease-free equilibrium of
\eqref{SIR}, $ \mathbf{x}_{*} =(\mathbf{1}, \mathbf{0})^{\top} $, where
$\mathbf{1}$ and $\mathbf{0}$ are a length-$n$ vectors whose entries are all 1
and all 0 respectively. Putting this fixed point into the Jacobian gives
\begin{equation}
\mathbf{J}(\mathbf{x}_{*}) = 
\left(
\begin{array}{ c c }
\mathbf{0}& - \mathbf{M}\\
\mathbf{0} & \mathbf{M} - \mathbf{G}
\end{array}
\right) \text{ ,} \quad \text{where} \quad
\mathbf{G} = 
\left(
\begin{array}{ c c c c }
\gamma & 0 & \cdots & 0 \\
0 & \ddots & \ddots & \vdots \\
\vdots & \ddots & \ddots & 0 \\
0 & \cdots & 0 & \gamma
\end{array}
\right) \text{ .}
\end{equation}
Then the early behaviour of the epidemic is given by setting $\mathbf{I}(0) =
\mathbf{0}+\varepsilon \mathbf{y}$, leading to early dynamics of the form
\begin{equation} 
\frac{{\rm d}}{{\rm d}t} \mathbf{I}(t) =
\left(\mathbf{M} - \mathbf{G} \right) \mathbf{I}(t)
+ O(\varepsilon^2) \text{ .} \label{earlydyn}
\end{equation}

\subsection{Perron-Frobenius analysis}

In general, we expect the matrix $\mathbf{M} - \mathbf{G}$ to
be quasi-positive, since we expect every age to interact with every other at
some level, leading to positive off-diagonal elements, but recovery can lead to
elements $\leq 0$ on the diagonal. We consider the special case where each age
class is capable of supporting the disease independently of the others (i.e.\
$\beta_{a,a} > \gamma$) which is more likely to apply if the disease is more
transmissible and there are fewer age classes in the model. In this special
case, the matrix $\mathbf{M} - \mathbf{G}$ has positive real
entries, allowing us to use the following important theorem.  \\
\begin{thm} 
\emph{(Perron-Frobenius)}
Let $\mathbf{A}=(a_{i,j})$ be an n$\times$n positive matrix: $a_{i,j} > 0$ for
$1 \leq i,j \leq n$. Then the following statements hold:
\begin{enumerate}
\item There is a positive real number $\lambda_1$, called the {\bf Perron root}
or the {\bf Perron-Frobenius eigenvalue}, such that $\lambda_1$ is an eigenvalue of
$\mathbf{A}$ and any other eigenvalue $\lambda_{i\neq 1}$ has
$| \lambda_i | < \lambda_1$.
\item The Perron-Frobenius eigenvalue is simple.
\item There exists an eigenvector $\mathbf{v}_1$ of $\mathbf{A}$ with eigenvalue
$\lambda_1$ such that all components of $\mathbf{v}_1$ are
positive.
\item There are no other positive eigenvectors. I.e.\ all other eigenvectors
must have at least one negative or non-real component.
\end{enumerate}
\end{thm}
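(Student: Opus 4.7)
The plan is to establish the four items in sequence, starting with a compactness-based existence argument and then deriving the qualitative properties from the strict positivity of the entries of $\mathbf{A}$.

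First I would produce $\lambda_1$ together with a non-negative eigenvector by applying Brouwer's fixed-point theorem to the continuous self-map
\begin{equation*}
T : \Delta \to \Delta,\qquad T(\mathbf{v}) = \frac{\mathbf{A}\mathbf{v}}{\|\mathbf{A}\mathbf{v}\|_{1}},
\end{equation*}
where $\Delta = \{\mathbf{v} \in \mathbb{R}_{\geq 0}^{n} : \sum_i v_i = 1\}$ is the probability simplex; the map is well-defined and continuous because $\mathbf{A}\mathbf{v}$ has strictly positive entries whenever $\mathbf{v} \in \Delta$. Any fixed point $\mathbf{v}_1$ satisfies $\mathbf{A}\mathbf{v}_1 = \lambda_1 \mathbf{v}_1$ with $\lambda_1 := \|\mathbf{A}\mathbf{v}_1\|_1 > 0$, and because $\lambda_1 \mathbf{v}_1 = \mathbf{A}\mathbf{v}_1$ has strictly positive entries we obtain $\mathbf{v}_1 > 0$, establishing (iii). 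Running the same construction on $\mathbf{A}^{\top}$ yields a strictly positive left eigenvector $\mathbf{u}_1$ whose eigenvalue coincides with $\lambda_1$, as can be seen from $\lambda_1 \mathbf{u}_1^{\top} \mathbf{v}_1 = \mathbf{u}_1^{\top} \mathbf{A} \mathbf{v}_1 = \lambda_1' \mathbf{u}_1^{\top} \mathbf{v}_1$ together with $\mathbf{u}_1^{\top} \mathbf{v}_1 > 0$.

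For item (i), I would take any eigenpair $(\mu, \mathbf{w})$ and apply the triangle inequality componentwise to obtain $\mathbf{A}|\mathbf{w}| \geq |\mu|\,|\mathbf{w}|$. Pairing with $\mathbf{u}_1^{\top}$ gives $\lambda_1 \mathbf{u}_1^{\top} |\mathbf{w}| \geq |\mu|\,\mathbf{u}_1^{\top}|\mathbf{w}|$, and positivity of $\mathbf{u}_1^{\top}|\mathbf{w}|$ then yields $|\mu| \leq \lambda_1$. To promote this to a strict inequality when $\mu \neq \lambda_1$, I would examine when the componentwise triangle inequality is tight: equality in row $i$ forces all summands $a_{ij} w_j$ to share a common complex phase, and because every $a_{ij} > 0$ and $|\mathbf{w}| > 0$ (itself a consequence of $\mathbf{A}|\mathbf{w}| = |\mu|\,|\mathbf{w}|$ with $\mathbf{A}$ positive), this forces every $w_j$ to share a single phase $e^{i\theta}$, so $\mathbf{w} = e^{i\theta}|\mathbf{w}|$ and $\mu = \lambda_1$. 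For items (ii) and (iv) I would again use the left eigenvector. Geometric simplicity follows by taking any eigenvector $\mathbf{z}$ for $\lambda_1$ linearly independent of $\mathbf{v}_1$, forming a real combination $\mathbf{v}_1 - c\mathbf{z}$ that has at least one zero component but no negative component, and observing that $\mathbf{A}(\mathbf{v}_1 - c\mathbf{z}) = \lambda_1(\mathbf{v}_1 - c\mathbf{z})$ forces the left-hand side to be strictly positive, a contradiction. To rule out a Jordan block and thereby complete algebraic simplicity, I would suppose $(\mathbf{A} - \lambda_1 \mathbf{I})\mathbf{z} = \mathbf{v}_1$ for some $\mathbf{z}$ and pair with $\mathbf{u}_1^{\top}$ to reach the contradiction $0 = \mathbf{u}_1^{\top} \mathbf{v}_1 > 0$. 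For (iv), any eigenpair $(\mu, \mathbf{w})$ with $\mathbf{w} > 0$ satisfies $(\mu - \lambda_1)\mathbf{u}_1^{\top} \mathbf{w} = 0$ with $\mathbf{u}_1^{\top} \mathbf{w} > 0$, so $\mu = \lambda_1$ and geometric simplicity makes $\mathbf{w}$ a positive multiple of $\mathbf{v}_1$.

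The main obstacle I anticipate is the strict inequality in item (i): bounding $|\mu| \leq \lambda_1$ is a short triangle-inequality estimate, but upgrading it to $|\mu| < \lambda_1$ requires the careful phase argument above, which genuinely uses $a_{ij} > 0$ rather than merely $a_{ij} \geq 0$ and is the only step where the hypothesis of strict positivity is used decisively. Once that step is in hand, items (ii)--(iv) reduce to routine manipulations with $\mathbf{u}_1$ and the strict positivity of $\mathbf{v}_1$.
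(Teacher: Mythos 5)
This is a classical result which the paper does not prove at all: its ``proof'' is a one-line citation to \citet[Chapter 8]{Meyer00}. Your proposal, by contrast, is a genuine self-contained proof sketch, and it is essentially correct. The route you take --- existence of a positive eigenpair via Brouwer's fixed-point theorem on the simplex, a positive left eigenvector for $\mathbf{A}^{\top}$ used as a linear functional to bound $|\mu|\leq\lambda_1$ and to kill Jordan blocks, and the equality case of the triangle inequality to upgrade to $|\mu|<\lambda_1$ --- is one of the standard textbook arguments; Meyer's own development goes instead through the Collatz--Wielandt variational characterisation $\lambda_1=\max_{\mathbf{x}\geq 0}\min_i (\mathbf{A}\mathbf{x})_i/x_i$ and limits of $(\mathbf{A}/\lambda_1)^k$, so your approach trades that machinery for a topological existence step. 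Two small points you would need to spell out in a full write-up: in the geometric-simplicity argument you must first reduce to a \emph{real} eigenvector $\mathbf{z}$ (take real or imaginary parts, which is legitimate since $\lambda_1$ and $\mathbf{A}$ are real) and then choose the sign of $\mathbf{z}$ and the critical value $c=\min\{v_{1,i}/z_i : z_i>0\}$ so that $\mathbf{v}_1-c\mathbf{z}$ is nonnegative, nonzero, and has a vanishing component; and in the strict-inequality step you should note explicitly that $|\mu|=\lambda_1$ together with $\mathbf{u}_1^{\top}(\mathbf{A}|\mathbf{w}|-|\mu||\mathbf{w}|)=0$ and $\mathbf{u}_1>0$ forces the componentwise inequality $\mathbf{A}|\mathbf{w}|\geq|\mu||\mathbf{w}|$ to be an equality before you may invoke the equality case of the triangle inequality row by row. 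Neither point is a gap in the ideas, only in the bookkeeping.
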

\begin{proof}
See e.g.\ \citet[Chapter 8]{Meyer00}.
\end{proof}

We will also assume that the initial vector direction for the dynamical system,
$\mathbf{y}$, can be written as a linear combination of the eigenvectors
$\{\mathbf{v}_i\}$ of $\mathbf{M} - \mathbf{G}$
\begin{equation}
\mathbf{y} = 
\sum_{i=1}^{n} C_i \mathbf{v}_{i} \text{ ,}
\qquad (C_i \in \mathbb{C}) \text{ .}
\end{equation}
This, together with~\eqref{earlydyn}, means that early in the epidemic,
\begin{equation}
\begin{aligned}
\mathbf{I}(t) & = \sum_{i=1}^{n} c_i \mathbf{v}_{i} {\rm e}^{\lambda_i t}
\text{ ,} \qquad (c_i \in \mathbb{C})\\
& = {\rm e}^{\lambda_1 t}\left(c_1 \mathbf{v}_1  +
\sum_{i=2}^{n} c_i \mathbf{v}_{i} {\rm e}^{(\lambda_i - \lambda_1) t} \right)
\text{ .} \label{ivexpansion}
\end{aligned}
\end{equation}
From the Perron-Frobenius theorem, we can then argue that for sufficiently
small $|c_{i>1}|$, and sufficiently large $t$, the contribution to $\mathbf{I}$
from eigenvectors other than $\mathbf{v}_1$ can be made arbitrarily small since
the absolute value of such contributions decays exponentially over time.

\section{Analytical approach}

\subsection{Models of assortative mixing}

We now consider mixing matrices that have the assortative property observed
in real networks, but remain amenable to analytic methods. These are in fact
special cases of the matrices considered using a time-independent framework to
determine invasion thresholds in \citet[\S{}5.3.2]{Diekmann00}. Our aim,
however, is to consider time-dependent transient behaviour for diseases that
are successfully invading.\\

\begin{definition} 
A \emph{Basic Mixing Matrix} is an $n\times n$ matrix $\mathbf{B}$ such that
\begin{equation*}
\mathbf{B}(\alpha,n) \coloneqq \left(
\begin{array}{ c c c c c }
1 + \alpha & 1 & 1 & \cdots & 1 \\
1 & 1 + \alpha & 1 & \cdots & 1 \\
1 & 1 & 1 + \alpha & \ddots & \vdots \\
\vdots & \vdots & \ddots & \ddots & 1 \\
1 & 1 & \cdots & 1 & 1+ \alpha 
\end{array}
\right)
\end{equation*}
Where $ \alpha \in \mathbb{R} $, $ \alpha > -1 $.
\end{definition}

\begin{thm} 
Let $ \mathbf{B}(\alpha,n)$ be a Basic Mixing Matrix, then $ \lambda = n +
\alpha $ is an eigenvalue with eigenvector $ \mathbf{v}_1=(1,1,\ldots,1)^{\top}
$ and $ \tilde{\lambda} = \alpha $ is an eigenvalue with algebraic multiplicity
$ n-1 $ corresponding to $ n-1 $ eigenvectors of the form; \newline $$
\mathbf{v}_{i\neq 1} = (1,1,\ldots,1-n,\ldots,1)^{\top}, $$ with $ 1-n $ in the
$ i^{th} $ place. \label{thm:bmm}
\end{thm}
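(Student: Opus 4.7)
The plan is to exploit the simple structure of $\mathbf{B}(\alpha,n)$ by writing it as $\alpha \mathbf{I}_n + \mathbf{J}_n$, where $\mathbf{I}_n$ is the identity matrix and $\mathbf{J}_n$ is the $n\times n$ all-ones matrix. Since adding a scalar multiple of the identity shifts every eigenvalue by that scalar without changing the eigenvectors, the entire problem reduces to computing the spectrum of $\mathbf{J}_n$.

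First I would observe that $\mathbf{J}_n$ has rank $1$, so its kernel has dimension $n-1$, giving $0$ as an eigenvalue of algebraic (and geometric) multiplicity at least $n-1$. Since the trace of $\mathbf{J}_n$ is $n$, the remaining eigenvalue must be $n$. A direct check confirms that $\mathbf{J}_n (1,1,\ldots,1)^{\top} = n\,(1,1,\ldots,1)^{\top}$. Adding $\alpha \mathbf{I}_n$ then yields the eigenvalues $n+\alpha$ and $\alpha$ (with multiplicity $n-1$) claimed in the statement.

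Next I would verify that the vectors $\mathbf{v}_{i\neq 1}$ given in the statement are eigenvectors for $\tilde\lambda = \alpha$. Each such vector has entries summing to $(n-1)\cdot 1 + (1-n) = 0$, so it lies in $\ker(\mathbf{J}_n)$, confirming that $\mathbf{B}(\alpha,n)\mathbf{v}_{i\neq 1} = \alpha \mathbf{v}_{i\neq 1}$. To see that they furnish a basis for the $\alpha$-eigenspace, I would exhibit the $n-1$ vectors with $i=2,\ldots,n$ and argue linear independence: any nontrivial relation $\sum_{i=2}^{n} a_i \mathbf{v}_i = \mathbf{0}$ forces, by inspecting the $i$-th component for $i\geq 2$, the equation $\sum_{j\neq i} a_j + (1-n) a_i = 0$, which combined with the first-component equation $\sum_{i=2}^n a_i = 0$ readily yields $a_i = 0$ for all $i$.

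This proof has no genuine obstacle; it is essentially a bookkeeping argument built on the rank-one decomposition. The only place where a small amount of care is required is the linear independence step, where one must be careful not simply to cite that ``distinct eigenvectors are independent'' (since these eigenvectors share the same eigenvalue) but instead argue directly from the structure of the vectors.
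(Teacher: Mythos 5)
Your proof is correct, but it follows a genuinely different route from the paper's. The paper verifies directly that $\mathbf{B}\mathbf{v}_1 = (n+\alpha)\mathbf{v}_1$ and $\mathbf{B}\mathbf{v}_{i\neq 1} = \alpha\mathbf{v}_{i\neq 1}$, invokes Perron--Frobenius (which is where the hypothesis $\alpha>-1$ enters, via positivity of the entries) to conclude that $n+\alpha$ is the simple dominant eigenvalue, and then gets the multiplicity of $\alpha$ by asserting linear independence of the full set of $n$ eigenvectors. You instead decompose $\mathbf{B}(\alpha,n)=\alpha\mathbf{I}_n+\mathbf{J}_n$ and read off the spectrum of the rank-one matrix $\mathbf{J}_n$ from rank--nullity plus the trace. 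Your route buys two things: it pins down the \emph{algebraic} multiplicity $n-1$ of $\alpha$ cleanly (the trace argument excludes any extra Jordan structure, whereas exhibiting eigenvectors a priori only bounds the geometric multiplicity from below, though the paper's count of $n$ independent eigenvectors does close that gap), and it dispenses with Perron--Frobenius entirely, so the spectral computation is valid for any real $\alpha$. What the paper's appeal to Perron--Frobenius buys instead is the identification of $n+\alpha$ as the dominant eigenvalue with the unique positive eigenvector --- the fact actually needed in the subsequent early-time analysis --- though for $\alpha>-1$ this also follows from the trivial inequality $n+\alpha>\alpha$. Your explicit component-wise verification of the linear independence of $\mathbf{v}_2,\ldots,\mathbf{v}_n$ is more careful than the paper's bare assertion, and you are right that one cannot simply cite independence of eigenvectors here since these all share the eigenvalue $\alpha$.
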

\begin{proof}
Eigenvalues and eigenvectors satisfy the equation $ \mathbf{B}\mathbf{v} =
\lambda \mathbf{v}$, hence letting $ \mathbf{v}_1=(1,1,\ldots,1)^{\top}$ gives
$ \lambda = \lambda_1 = n+ \alpha $, and by Perron-Frobenius, this eigenvalue is
simple and all other eigenvalues are less than $ \lambda $ with eigenvectors
that have at least one negative component.  If we now take $ \mathbf{v}_{i} =
(1,1,\ldots,1-n,\ldots,1)^{\top}, \ i = 1,\ldots, n $ for $ i \neq 1 $, where $
1-n $ is in the $ i^{th} $ place, this gives $ \lambda_{i} = \alpha, \ \forall i
\neq 1 $. Then since $ \mathbf{v}_{i} $ are linearly independent $ \forall i $
we have found the entire system of $ n $ eigenvectors and $ \tilde{\lambda} =
\lambda_{i\neq 1} = \alpha $ has algebraic multiplicity $ n-1 $.
\end{proof}

\begin{definition} 
A \emph{Simple Mixing Matrix} is an $n\times n$ matrix $\mathbf{S}$ such that
\begin{equation*}
\mathbf{S}(\boldsymbol{\alpha}, n) \coloneqq \left(
\begin{array}{ c c c c c }
1 + \alpha_{1} & 1 & 1 & \cdots & 1 \\
1 & 1 + \alpha_{2} & 1 & \cdots & 1 \\
1 & 1 & 1 + \alpha_{3} & \ddots & \vdots \\
\vdots & \vdots & \ddots & \ddots & 1 \\
1 & 1 & \cdots & 1 & 1+ \alpha_{n}
\end{array}
\right)
\end{equation*}
Where $ \alpha_i \in \mathbb{R} $, $ \alpha_i > -1 $.
\end{definition}

\begin{thm} 
The Simple Mixing Matrix $\mathbf{S}(\boldsymbol{\alpha}, 2)$ has eigensystem
\begin{equation}
\lambda_{\pm} = \frac{2+ \alpha_{1} + \alpha_{2} \pm \sqrt{(\alpha_{1} - \alpha_{2})^2 +4}}{2} \text{ ,}\qquad \mathbf{v}_{\pm} = \left( 
	\begin{array}{ c }
	1 \\
	 \frac{ \alpha_{2} - \alpha_{1} \pm \sqrt{(\alpha_{1} - \alpha_{2})^2 +4}}{2}
	\end{array}
\right) \text{ .} \label{s2esyst}
\end{equation}
\end{thm}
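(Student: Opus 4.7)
The plan is to treat this as a direct $2\times 2$ eigenvalue computation: first extract $\lambda_\pm$ from the characteristic polynomial, then recover $\mathbf{v}_\pm$ from the first row of $(\mathbf{S} - \lambda_\pm \mathbf{I})\mathbf{v} = \mathbf{0}$. Since $\mathbf{S}(\boldsymbol{\alpha},2)$ is a genuine $2\times 2$ matrix with explicit entries $1+\alpha_1, 1, 1, 1+\alpha_2$, the Perron-Frobenius machinery of the earlier theorem is not needed for the derivation itself; it serves only to reassure us that one of the two $\lambda_\pm$ (namely $\lambda_+$) is the dominant real eigenvalue.

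First I would form $\det(\mathbf{S}-\lambda\mathbf{I}) = (1+\alpha_1-\lambda)(1+\alpha_2-\lambda)-1$, expand it to obtain the quadratic
\begin{equation*}
\lambda^2 - (2+\alpha_1+\alpha_2)\lambda + (\alpha_1+\alpha_2+\alpha_1\alpha_2) = 0 \text{ ,}
\end{equation*}
and then apply the standard quadratic formula. The only calculation requiring a little care is the discriminant: I expect $(2+\alpha_1+\alpha_2)^2 - 4(\alpha_1+\alpha_2+\alpha_1\alpha_2)$ to simplify, after cancelling the linear-in-$\alpha$ terms, to $4 + (\alpha_1-\alpha_2)^2$, reproducing the expression under the square root in \eqref{s2esyst}.

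Second, for each root $\lambda_\pm$, I would set the first component of $\mathbf{v}_\pm$ equal to $1$ (consistent with the stated form) and use the first row $(1+\alpha_1-\lambda_\pm) + v_2 = 0$ to read off $v_2 = \lambda_\pm - 1 - \alpha_1$. Substituting the expression for $\lambda_\pm$ and cancelling the $2$ and $\alpha_1$ contributions gives exactly $v_2 = \bigl(\alpha_2-\alpha_1 \pm \sqrt{(\alpha_1-\alpha_2)^2+4}\bigr)/2$, matching~\eqref{s2esyst}. As a sanity check I would verify the second row, $v_2(1+\alpha_2-\lambda_\pm)+1=0$, which should hold automatically because $\lambda_\pm$ was chosen to annihilate the determinant.

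There is no real obstacle here — the computation is entirely routine — so the only thing to be careful about is bookkeeping of signs in the discriminant simplification and in showing that the $+$ branch of $\lambda_\pm$ indeed corresponds to the $+$ branch of $\mathbf{v}_\pm$ under the sign convention chosen in the statement.
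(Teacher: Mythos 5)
Your proposal is correct: the characteristic polynomial, the discriminant simplification to $(\alpha_1-\alpha_2)^2+4$, and the eigenvector component $v_2=\lambda_\pm-1-\alpha_1$ all check out, and the sign branches align as claimed. The paper's proof is the mirror image of yours — it simply substitutes the stated $\lambda_\pm$, $\mathbf{v}_\pm$ into $\mathbf{S}(\boldsymbol{\alpha},2)\mathbf{v}_\pm=\lambda_\pm\mathbf{v}_\pm$ and verifies, rather than deriving them from the characteristic polynomial — but for a $2\times 2$ matrix these amount to the same elementary computation.
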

\begin{proof}
Substitute~\eqref{s2esyst} into $\mathbf{S}(\boldsymbol{\alpha}, 2)
\mathbf{v}_{\pm} = \lambda_{\pm} \mathbf{v}_{\pm}$.
\end{proof}

Note that finding the eigensystem of $\mathbf{S}(\boldsymbol{\alpha}, n)$ in
general involves solving an order-$n$ polynomial, but no simple general formula
exists as for a Basic Mixing Matrix, limiting the usefulness of analytic
methods, and we introduce this concept mainly to explain why a more general
analytical approach is not possible.

\subsection{Early time estimate}

We now wish to define a timescale over which the epidemic approaches its dominant
eigenvalue.\\

\begin{definition} \label{taudef} Consider an epidemic model where the early
vector of infectious individuals is given by~\eqref{ivexpansion}. The
\emph{Early Time} is defined as
\begin{equation}
\tau \coloneqq \frac{1}{\lambda - \tilde{\lambda}} \text{ ,}
%\end{equation}
\qquad \text{where} \quad
%\begin{equation}
\tilde{\lambda} \coloneqq
\mathrm{max}\left\{\mathrm{Re}(\lambda_{i>1})\right\}
\text{ .}
\end{equation}
\end{definition}

This definition has the property that over a period of, say, time $2\tau$, the
relative importance of each non-dominant eigenvector reduces by a factor of at
least $\mathrm{e}^2 \approx 7.389$.

\subsection{Late time estimate}

An estimation of the late time is also necessary to assess if convergence to
the dominant eigenvalue happens before susceptibles are depleted. Making use of
the standard definition of $R_0$ \citep{Diekmann:1990} as the dominant
eigenvalue of the next-generation matrix, in our case
$\mathbf{M}/\gamma$, we expect intuitively that the epidemic is near to
its maximum when the proportion of the population susceptible is $ S(t_*) =
1/R_0 $, since for this susceptible population (if allocated appropriately to
different age classes) the disease can no longer invade.\\

\begin{definition} \label{Tdef} Consider an epidemic model with basic
reproductive ratio $R_0$ given by the dominant eigenvalue of
$\mathbf{M}/\gamma$, starting with a proportion $\varepsilon$ of the
population infectious. The \emph{Late Time} is defined as
\begin{equation}
T \coloneqq  \frac{1}{\lambda} \ln \left( \frac{R_0 - 1}{R_0 \varepsilon} \right)
\text{ .}
\end{equation}
\end{definition}
This definition makes sense since if the population is initially
entirely susceptible we may approximate $ S(t) \approx 1 -
\varepsilon {\rm e}^{\lambda t} $. The time at which this quantity first exceeds
$1/R_0$ should therefore correspond to a sensible definition of late time.
Figure~\ref{fig:latetimeapprox} visualises the concepts involved in this
definition.

\subsection{Separation of early and late time}

We are now in a position to compare early and late time. Clearly, it is
always possible to make $\varepsilon$ sufficiently small to keep these
timescales separated; explicitly, this requires
\begin{equation}
\tau \ll T \quad \Rightarrow \quad \varepsilon \ll 
 \frac{R_0 - 1}{R_0}\; \mathrm{exp}\left( \frac{\lambda}%
{\tilde{\lambda} - \lambda} \right)
\label{lleps} \text{ .}
\end{equation}
Now suppose we consider transmission based on a Basic Mixing Matrix
multiplied by an overall transmission rate $\nu$ so that 
$\mathbf{M} = \nu \mathbf{B}(\alpha,n)$, meaning that the eigensystem
of $\mathbf{M} - \mathbf{G}$ can be obtained by application of
Theorem~\ref{thm:bmm}.  The basic reproductive ratio is $R_0 =
\nu(n+\alpha)/\gamma$, and for an initial infection of $ I_{0} $ in the $
i^{th} $ age group, 
\begin{equation}
\mathbf{I}(0) = ( 0, \ldots, 0, \underbracket{I_0}_{%
\mathclap{i\text{-th position}}}, 0, \ldots, 0 )^{\top}
= \frac{I_{0}}{n}\ (\mathbf{v}_1 - \mathbf{v}_{i}) \text{ .}
\end{equation} 
Note that the symmetry of the system means that we can let $i\neq 1$ without
loss of generality. This initial condition leads to early behaviour
\begin{equation} \label{eq:eigensystem}
\mathbf{I}(t) = \frac{ I_0 }{n} {\rm e}^{ \lambda t} \left( \mathbf{v}_1 -
{\rm e}^{-(\lambda - \tilde{\lambda})t} \mathbf{v}_i \right) \text{ ,}
\quad \text{where} \quad
\lambda = \nu (n + \alpha) - \gamma \text{ ,}\quad
\tilde{\lambda} = \nu \alpha - \gamma \text{ ,} \quad
\end{equation} 
and so the early time $\tau = 1/(\nu n)$ .  It is most instructive to consider
what happens at constant $R_0$, so substituting into~\eqref{lleps} in this
regime gives separation of timescales when 
\begin{equation}
I_0 \ll 
 \frac{R_0-1}{R_0}
\mathrm{exp}\left(\frac{R_0 -1}{(\nu\alpha/\gamma) - R_0} \right)
\text{ .} \label{i0ll}
\end{equation}
Taking appropriate partial derivatives of the right-hand side  of this
expression, we see that for a given $R_0$, a lower $I_0$ is required to secure
convergence if we increase $\nu$ or $\alpha$, or if we decrease $\gamma$. This
is as would be expected from thinking qualitatively about the problem.

\section{Numerical approach}

\subsection{Quantification of convergence}

It will be convenient to define a numerical measure $\phi(t)$ of convergence to the
dominant eigenvalue. 
\begin{equation}
\begin{aligned}
\label{psi}
\phi(t) & \coloneqq \mathrm{tan}^{-1} ( \psi(t) ) \text{ ,}\\ \text{where}\qquad
\psi(t) & \coloneqq \frac{{\rm d}}{{\rm d}t} \displaystyle \left( \mathrm{ln} \left( \frac{
I(t) }{ J(t) } \right)\right) \text{ ,}\\
I(t) & \coloneqq \sum_a I_a(t) \text{ ,}\\
J(t) & \coloneqq {\rm e}^{\lambda t} \text{ .}
\end{aligned}
\end{equation}
This definition gives $\phi(t) = 0$ at times when the epidemic moves along the
dominant eigenvector, and non-zero values in the interval $(-1,1)$ otherwise.

\subsection{Contact survey data}

We take the form for $\mathbf{M}$ in our numerical work from the
results of the POLYMOD study \citep{Mossong08} for all reported
contacts (physical and conversational) across all countries.  We use
age classes of width 5 years and a final class for ages 70+. Consideration of
population pyramids \citep{Stats07,Stillwell11} shows that each age group is of
almost equal weighting up to 70, after which the rest of the population is in a
rapidly declining tail which in total is roughly equal to the weighting of one
of the 5 year classes, and so for present purposes where we are looking for
broad insights rather than precise predictions the normalisation~\eqref{norm}
is appropriate.

\subsection{Results}

When we integrate~\eqref{SIR} directly using Runge-Kutta, for
infection starting in the 15--19 age group, Figure \ref{fig:phipoly} shows the
results obtained for $\phi(t)$ and $I(t)$. Strikingly, this plot shows that
even during periods of the epidemic where the prevalence timeseries looks
linear on a logarithmic y-axis, $\phi(t)$ can fail to equal zero for an
appreciable period of time (meaning that the dynamics~\eqref{SIR} are not
governed by the dominant eigenvector) for modest initial infectious
populations.  Whether such initial concentrations of infection can be seen,
perhaps as a result of an unpredictable event early in the epidemic when
stochastic effects dominate, depends on the population size that can be seen as
mixing according to~\eqref{SIR}. At country size -- i.e.\ tens or hundreds of
millions of individuals -- small $I_0$ is most likely, but at city sizes of
around $10^5$, it is not inconceivable that an epidemic could start with
several hundred students in the same school leading to $I_0 \sim 10^{-3}$.

\section{Discussion}

In this paper, we have considered the real-time rate of convergence of
age-structured SIR epidemic models to their dominant eigenvector, both
analytically for a basic model of assortative mixing, and numerically for
realistic mixing.  In each case, sufficiently small initially infectious
populations are needed to ensure that the dynamical system does indeed converge
before the depletion of susceptibles becomes important. In many cases this
effect will not matter; but it is plausible that for some supercritical
epidemics in moderately-sized, closed populations with strongly age-structured
mixing, this failure of convergence to the dominant eigenvector will be a
source of bias. In particular, attempts to infer the relative susceptibility of
different age classes may become biased. If a particular age group is
over-represented in the initially infected group as compared to the dominant
eigenvector, then its relative susceptibility may be overestimated.  Correction
for non-convergence to the dominant eigenvector is therefore potentially
important in epidemiological analysis.

\section*{Acknowledgements}

TH is supported by the Engineering and Physical Sciences Research Council.

\begin{figure}[H]
\centering
\includegraphics[width=0.9\textwidth]{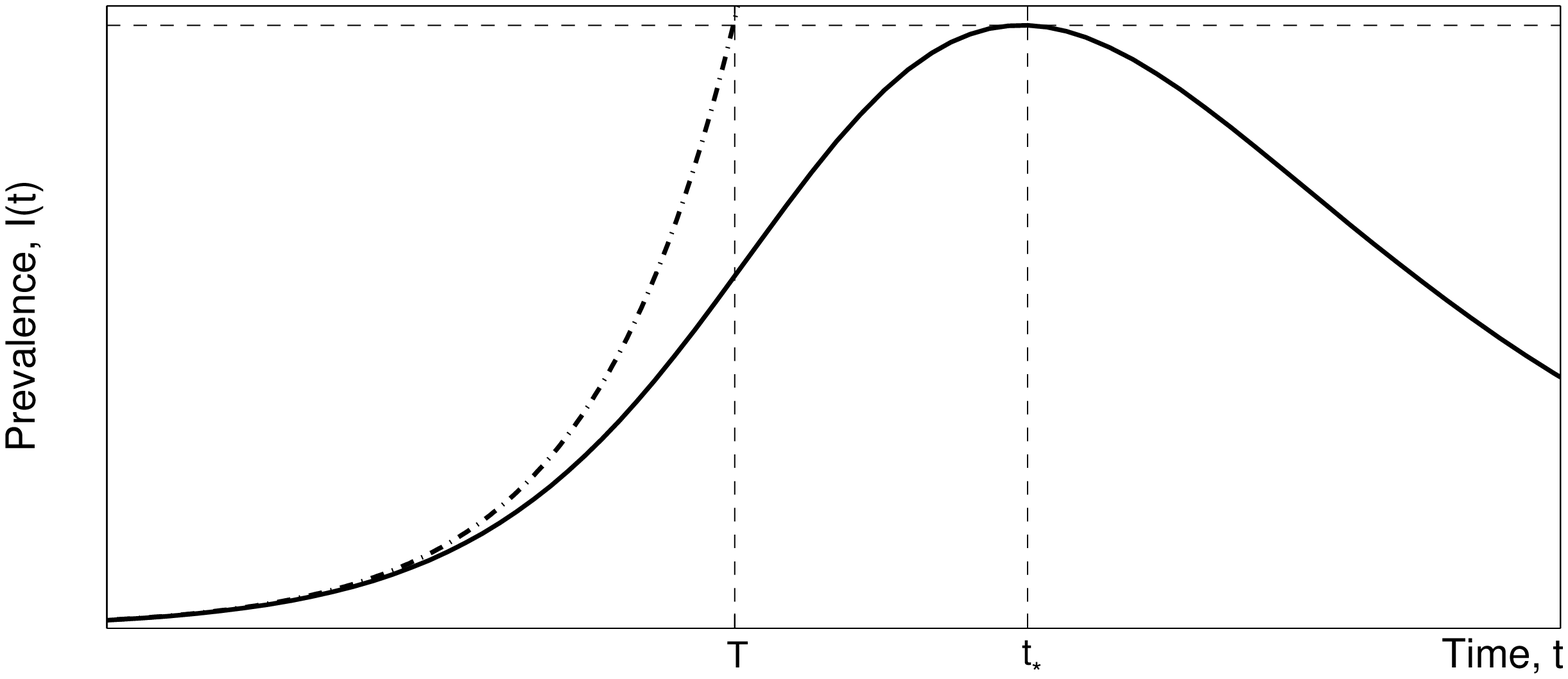}
\caption{Visualisation of the late time approximation. Peak time $t_*$
corresponds to the peak of the full epidemic (solid black line) and late time
$T$ is the equivalent timescale for the exponential growth approximation
(dashed line).}
\label{fig:latetimeapprox}
\end{figure}

\begin{figure}[H]
\centering
\includegraphics[width=0.9\textwidth]{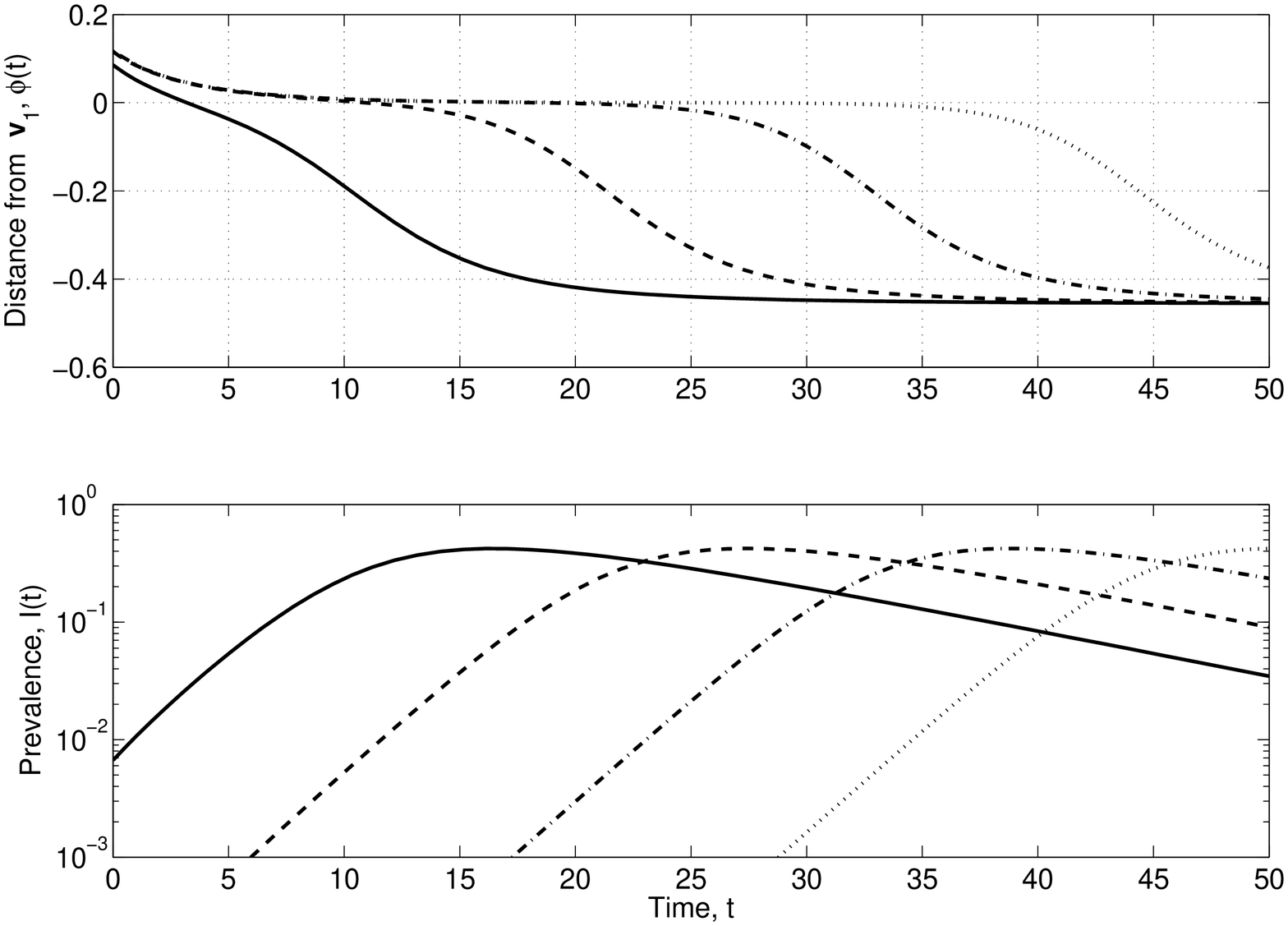}
\caption{Top: Distance from the dominant eigenvalue, $\phi$, for the POLYMOD
mixing matrix. Bottom: prevalence against time on a logarithmic scale.
Parameters are: $R_0=5$, $\gamma=0.1$, $I_0=10^{-1}$ (solid lines),
$I_0=10^{-3}$ (dashed lines), $I_0=10^{-5}$ (dash-dot lines), $I_0=10^{-7}$
(dotted lines).  Infection is started in the 15--19 year-old age
group. }
\label{fig:phipoly}
\end{figure}


\begin{thebibliography}{14}
\providecommand{\natexlab}[1]{#1}
\providecommand{\url}[1]{\texttt{#1}}
\expandafter\ifx\csname urlstyle\endcsname\relax
  \providecommand{\doi}[1]{doi: #1}\else
  \providecommand{\doi}{doi: \begingroup \urlstyle{rm}\Url}\fi

\bibitem[Anderson and May(1983)]{Anderson:1983}
R.~M. Anderson and R.~M. May.
\newblock Vaccination against rubella and measles: quantitative investigations
  of different policies.
\newblock \emph{The Journal of Hygiene}, 90\penalty0 (2):\penalty0 259--325,
  1983.

\bibitem[Anderson and May(1991)]{AndersonMay}
R.~M. Anderson and R.~M. May.
\newblock \emph{Infectious Diseases of Humans}.
\newblock Oxford University Press, Oxford, England, 1991.

\bibitem[Baguelin et~al.(2010)Baguelin, {{van Hoek}}, Jit, Flasche, White, and
  Edmunds]{Baguelin10}
M.~Baguelin, A.~J. {{van Hoek}}, M.~Jit, S.~Flasche, P.~White, and W.~J.
  Edmunds.
\newblock Vaccination against pandemic influenza {A}/{H}1{N}1v in {E}ngland: a
  real-time economic evaluation.
\newblock \emph{Vaccine}, 28:12\penalty0 (2370-2384), 2010.

\bibitem[{{del Valle}} et~al.(2007){{del Valle}}, Hyman, Hethcote, and
  Eubank]{DelValle:2007}
S.~Y. {{del Valle}}, J.~M. Hyman, H.~W. Hethcote, and S.~G. Eubank.
\newblock Mixing patterns between age groups in social networks.
\newblock \emph{Social Networks}, 29\penalty0 (4):\penalty0 539--554, 2007.

\bibitem[Diekmann and Heesterbeek(2000)]{Diekmann00}
O.~Diekmann and J.~A.~P. Heesterbeek.
\newblock \emph{Mathematical Epidemiology of Infectious Diseases: Model
  Building, Analysis and Interpretation}.
\newblock John Wiley and Sons Ltd, Chichester, West Sussex, England, 2000.

\bibitem[Diekmann et~al.(1990)Diekmann, Heesterbeek, and Metz]{Diekmann:1990}
O.~Diekmann, J.~A.~P. Heesterbeek, and J.~A.~J. Metz.
\newblock On the definition and the computation of the basic reproduction ratio
  {{$R_0$}} in models for infectious diseases in heterogeneous populations.
\newblock \emph{Journal of Mathematical Biology}, 28\penalty0 (4):\penalty0
  365--382, Jan 1990.

\bibitem[Edmunds et~al.(2006)Edmunds, Kafatos, Wallinga, and
  Mossong]{Edmunds06}
W.~Edmunds, G.~Kafatos, J.~Wallinga, and J.~Mossong.
\newblock Mixing patterns and the spread of close-contact infectious diseases.
\newblock \emph{Emerging Themes in Epidemiology}, 3\penalty0 (909-912), 2006.

\bibitem[Glasser et~al.(2012)Glasser, Feng, Moylan, {{del Valle}}, and
  Castillo-Chavez]{Glasser2012}
J.~Glasser, Z.~Feng, A.~Moylan, S.~Y. {{del Valle}}, and C.~Castillo-Chavez.
\newblock Mixing in age-structured population models of infectious diseases.
\newblock \emph{Mathematical Biosciences}, 235, 2012.

\bibitem[Keeling and Rohani(2008)]{Keeling08}
M.~J. Keeling and P.~Rohani.
\newblock \emph{Modeling Infectious Diseases in Humans and Animals}.
\newblock Princeton University Press, Princeton, New Jersey, USA, 2008.

\bibitem[Meyer(2000)]{Meyer00}
C.~D. Meyer.
\newblock \emph{Matrix Analysis and Applied Linear Algebra}.
\newblock SIAM, Philadelphia, USA, 2000.

\bibitem[Mossong et~al.(2008)Mossong, Hens, Jit, Beutels, Auranen, Mikolajczyk,
  Massari, Salmaso, Tomba, Wallinga, Heijne, Sadkowska-Todys, Rosinska, and
  Edmunds]{Mossong08}
J.~Mossong, N.~Hens, M.~Jit, P.~Beutels, K.~Auranen, R.~Mikolajczyk,
  M.~Massari, S.~Salmaso, G.~S. Tomba, J.~Wallinga, J.~Heijne,
  M.~Sadkowska-Todys, M.~Rosinska, and W.~J. Edmunds.
\newblock Social contacts and mixing patterns relevant to the spread of
  infectious diseases.
\newblock \emph{PLoS Medicine}, 5\penalty0 (3):\penalty0 e74, 2008.

\bibitem[Stillwell and Clarke(2011)]{Stillwell11}
J.~Stillwell and M.~Clarke.
\newblock \emph{Population Dynamics and Projection Methods}.
\newblock Springer Netherlands, Dordrecht Netherlands, 2011.

\bibitem[Tuljapurkar(1982)]{Tuljapurkar:1982}
S.~D. Tuljapurkar.
\newblock Why use population entropy? {{I}}t determines the rate of
  convergence.
\newblock \emph{Journal of Mathematical Biology}, 13:\penalty0 325--337, 1982.

\bibitem[Wroth and Wiles(2009)]{Stats07}
E.~C. Wroth and A.~Wiles.
\newblock Key population and vital statistics. 2007 data.
\newblock \emph{Office for National Statistics}, 34, 2009.

\end{thebibliography}
\end{document}